\documentclass[sn-mathphys-num]{sn-jnl}
\usepackage{multirow}%
\usepackage{amsmath,amssymb,amsfonts}%
\usepackage{amsthm}%
\usepackage{mathrsfs}%
\usepackage[title]{appendix}%
\usepackage{xcolor}%
\usepackage{textcomp}%
\usepackage{manyfoot}%
\usepackage{booktabs}%
\usepackage{algorithm}%
\usepackage{algorithmicx}%
\usepackage{algpseudocode}%
\usepackage{listings}%
\theoremstyle{thmstyleone}%
\newtheorem{theorem}{Theorem}%
\newtheorem{prop}[theorem]{Proposition}%
\newtheorem{lemma}{Lemma}%
\raggedbottom

\begin{document}

\title[Household Resource Allocation Dynamics and Policies: Integrating Future Earnings of Children, Fertility, Pension, Health, and Education]{Household Resource Allocation Dynamics and Policies: Integrating Future Earnings of Children, Fertility, Pension, Health, and Education}

\author[1]{\fnm{Sushmita} \sur{Kumari}}\email{sushmita.kumar@dypiu.ac.in}

\author*[2]{\fnm{Siddharth} \sur{Gavhale}}\email{siddharth.gavhale@dypiu.ac.in}
\equalcont{This author contributed equally to this work.}

\affil[1]{\orgdiv{Department of Economics}, \orgname{Vinoba Bhave University}, \orgaddress{\street{NH 33}, \city{Sindoor, Hazaribagh}, \postcode{825301}, \state{Jharkhand}, \country{India}}}

\affil*[2]{\orgdiv{School of Interdisciplinary Studies \& Research}, \orgname{DY Patil International University}, \orgaddress{\street{Sector 29 Akurdi}, \city{Pune}, \postcode{411044}, \state{Maharashtra}, \country{India}}}

\abstract{
This study presents a model that examines how families make decisions about having children, managing resources, and planning for their financial security in light of social and economic factors. It explores the balance between the number of children and the quality of life parents wish to provide, including education and future income prospects. By considering parents' expectations about their children's future earnings, the model gives new insights into how families allocate their resources. It also looks at how decisions about savings and pensions are connected to daily spending and family planning, showing the complex links between these factors. The findings offer valuable guidance for policies that address these interrelated challenges and better support families in managing their resources.}

\keywords{Household resource allocation, Utility maximization, Future wages of children, Savings and pension, Modified Quantity-Quality trade-off, Policy recommendations.}

\pacs[Acknowledgments]{The second author's research was partially supported by the seed money grant from DY Patil International University, Pune, India (Grant No. 20231028/23-24/SISR001). The authors also grateful to the anonymous referees for their valuable feedback and guidance, which improved this manuscript.}

\pacs[JEL Classification]{D13, J13 and J26}

\maketitle

\section{Introduction}Resource allocation in domestic settings is a vital factor that affects prosperity, touching on issues like human capabilities and long-lasting development of economics. With countries around the globe going through demographic changes marked by lower birth rates  in several countries and higher birth rate in some part of the world, it has become even more crucial to understand how families are adjusting resource allocation. This paper aims at generating an overall theoretical framework to analyze family decision-making processes during times of modifications in fertility patterns, particularly focusing on interactions between investments intended for children's education, saving habits, planning for retirement and the anticipated future earnings of children.

This study epitomizes an integrated approach to analyzing the various choices regarding households. Different scholars have considered fertility decisions, education investments, and savings behavior in isolation \cite{{Attanasio2003}, {Becker1960}, 
	{Hanushek1992}} , but this paper presents a holistic model that captures how they all interact with one another. In this way, it reveals how demographic transformations could influence different facets of household financial planning regarding human capital.
A tremendous pool of literature has explored the different ways people allocate resources in households. \cite{Becker1960} is one of the pioneers who started looking at how parents decide on the quantity-quality trade-off with respect to children in his seminal work regarding these issues. 
This idea was later supported by various researchers, including \cite{{Blake1981},{Hanushek1992}}, but those studies also took only one dimension of family decision making into account, thereby the whole range of interrelated choices that families are left with.

In the world of savings and retirement planning, scholars like \cite{Attanasio2003} have examined the ways in which pension systems relate to individual options for private savings. There is a possibility that public pension programs may displace private savings, emphasizing the importance of looking at every aspect of how families make financial decisions. Yet, these studies typically do not consider how such financial decisions are intertwined with fertility and human capital investment choices. Some research has started to look at the links between different dimensions of familial choice-making. For instance, a unified growth theory that ties demographic transition with human capital formation and economic development was proposed by  \cite{{GalorWeil2000}}. However, while their study is an important advance in understanding the macroeconomic dynamics of population and economic growth, it primarily centers on broad economic trends and does not extensively address the microeconomic decision-making processes within families.

The microeconomic model proposed in this paper goes further than these existing approaches with respect to its comprehensiveness by explicitly incorporating the connections between fertility, childern future wage,  education, health, savings, and pension decisions. Previous research has not looked at how anticipated future earnings of children shape present household decisions in a dynamic manner. This article's goal is to contribute to the study of a dynamic utility maximization model that takes into account fertility decisions, child education investments, health expenditures, savings, pension contributions, and the anticipated future earnings of children. With this model, we seek to clarify fertility choices and investments in human capital for children; observe the way education preference changes, health investments, and retirement savings affect the allocation of family resources; analyze how future earning expectations from children determine present consumption by their parents; as well as provide an explanation of why demographic changes might be believed to shape household economic behaviors.

This study fills a significant gap in the literature and provides useful insights for decision-makers who are concerned about the socio-economic impacts of demographic changes through a comprehensive model that combines multiple aspects of household decision-making. Careful theoretical analysis contributes to ongoing academic discourse on demographic economics and lays the groundwork for future empirical investigations into the complex interplay between factors influencing household economic behavior in light of demographic changes. The paper emphasizes on the need for a concerted policy framework that takes care of how fertility decisions interact with educational investments and retirement planning. For example, when education is subsidized through public funds, family size versus quality dilemma can be relaxed since families will not be forced by poverty to ignore their children's future earnings. Likewise, pension reforms that distinguish savings from pension contributions grant people more financial freedom so that they can spend on consumption now or save for later. This would help them balance current consumption needs and future saving plans as well as guarantee financial independence in their old age. These policies are necessary for adapting to demographic changes which induce household utility function adjustments that correspond to changes in economic conditions over time leading to sustainable economic growth and stability.

The paper is organized as follows: Section 2 outlines the theoretical framework, presenting the utility function and budget restrictions as the foundation upon which our model rests. Comprehensive analysis, including subsections that examine various aspects of household decision-making, is provided in Section 3. Section 4 addresses policy recommendations on ways to address the socio-economic consequences of changing demographics based on our results. Finally, Section 5 provides an overview of the study's contributions along with its limitations and future research directions.

\section{Theoretical Framework}
In this article, we employ a dynamic utility maximization approach to investigate the optimal allocation of household resources over a continuous time period. We consider households as two-parent families with children and presuppose that the household is a unified decision-making entity in which all individuals have similar desires and objectives. Consequently, we treat the household and the individual as analogous analytical units within a unitary household model. When this assumption is made, the analysis can proceed by modeling the behavior of households as though they were a single representative individual.

Initially, we build a utility function $u$, that represents the lifetime utility experienced by an individual born at time $t$. The utility function considers a wide range of variables that influence an individual's decision-making behavior, including consumption ($c_t$), savings ($s_t$), children's education ($e_t$), health care expenses ($p_t$), pension premiums ($q_t$), and the number of children ($n_t$) at time $t$. Moreover, the utility function accommodates a crucial aspect: the children's wage utilized by the individual, $w_{t+1}$.
The lifetime utility experienced by an individual born at time $t$ is given by the function:
\begin{eqnarray}
	u &=&  \gamma_1 \ln(c_t) + \gamma_2 \ln(n_t) + \gamma_3\ln(e_t) + \gamma_4 \ln(p_t) \nonumber \\
	&  & \hspace{2cm} + \gamma_5 \ln(n_t w_{t+1})   + \gamma_6 \ln(R_{t+1} s_t) 
	+ \gamma_7 \ln(R_{t+1}^p q_t)
	\label{eq}
\end{eqnarray}
where $\ln$ is the natural logarithm function and $R_{t+1}$, $R_{t+1}^p$ are the gross interest rates between periods $t$ and $t+1$. $s_t$ and $p_t$ are savings and pension premiums carried over from period $t$ to period $t+1$, respectively, where the composite terms $R_{t+1} s_t$ and $R_{t+1}^p q_t$ denote consumption in period $t+1$. The weights applied to the first four parameters in \eqref{eq}, i.e., $\gamma_i > 0$, for $i \in [1, 4]$ ensure their contribution to total utility is accounted for. The discount factors $\gamma_5$, $\gamma_6$, and $\gamma_7$, which fall between 0 and 1, represent the individual's preferences for different times when assessing future utilities.

To understand how people wisely allocate their household resources under changing preferences, we present a budget constraint that takes into account the trade-offs between income and costs. A household's budget reflects the fundamental trade-off between income and costs that restricts its alternatives for spending. In accordance with the standard technique in the literature \cite{{Becker1960},{BrowningLusardi1996},{Prettner2013}}, we define the budget constraint as follows:
\begin{eqnarray}
	w_t (1 - \tau n_t) = c_t + s_t + e_t n_t + p_t + q_t
	\label{eq constrain}
\end{eqnarray}
where the number of children that individuals choose to have is influenced by the fixed expense of each child, represented by $\tau$, further, it is obvious that $\tau > 0$.  The pay of an individual who gives the labor market their whole available time during period $t$ is represented by $w_t$. The individual's income is shown on the left side of \eqref{eq
	constrain}, which has been adjusted for the costs of raising children. On the other hand, the individual's expenses, which include consumption, savings, education, health care, and pension premiums, are shown on the right-hand side. This budget constraint ensures that household resources are fully allocated across different expenditure categories, reflecting the economic trade-offs inherent in fertility choices, human capital investment, and life-cycle consumption patterns.

Our idea is to analyze the utility function with respect to the budget constraint, so that we can explore how individuals optimize their choices regarding saving, investing in next period human capital, and consumption while considering the several constraints mentioned in the upcoming sections.

\subsection{Maximization of utility function}
Many strategies have been suggested to solve utility maximization problems in economic studies, such as Kuhn-Tucker conditions, dynamic programming, calculus-based solutions, and the Lagrangian method \cite{{Bazaraa2006},{Chiang1984},{Gibbons2007},{Stokey1989},{Varian1992}}. In this article, we use the Lagrangian approach. The Lagrangian method is particularly useful for dealing with optimization problems based on constraints. For our problem, the Lagrangian method enables the direct integration of maximization procedures with budget restrictions.
We maximize the utility function specified in (1) subject to the budget constraint, i.e., (2).
We construct the Lagrangian functional by introducing a Lagrange multiplier, $\lambda$, as follows:
\begin{equation}
	\begin{cases}
		\text{Maximize} &  \gamma_1 \ln(c_t) + \gamma_2 \ln(n_t) + \gamma_3 \ln(e_t) + \gamma_4 \ln(p_t)  + \gamma_5 \ln(n_t w_{t+1}) \\
		&  \hspace{1.25cm}  + \gamma_6 \ln(R_{t+1} s_t)  + \gamma_7 \ln(R_{t+1}^p q_t) = u\\
		\text{Subject to:} & w_t (1 - \tau n_t) -  (c_t + s_t + e_t n_t + p_t + q_t) = 0 
	\end{cases}
	\label{eq:max eq}
\end{equation}
The details of the solution to \eqref{eq:max eq} are given in Appendix \ref{appendix I}. We obtain following optimal solutions
\begin{eqnarray}
	c_t = \frac{\gamma_1 w_t}{S}, \quad
	s_t = \frac{\gamma_6 w_t}{S}, \quad
	p_t = \frac{\gamma_4 w_t}{S}, \quad
	q_t = \frac{\gamma_7 w_t}{S}, \quad
	n_t =  \frac{ \gamma_2 + \gamma_5 - \gamma_3}{\tau S} \label{sol}
\end{eqnarray}\vspace{-0.45cm}
\begin{eqnarray}
	e_t = \frac{\gamma_3 w_t\tau}{\gamma_2+\gamma_5-\gamma_3}, 
	\label{sol2}
\end{eqnarray}
where $S$ is summation of all weights and discount factors excluding $\gamma_3$, i. e., 
\begin{eqnarray}
	S = \sum_{\substack{i=1 \\ i\neq 3}}^{7} \gamma_i > 0.
	\label{sol:s}
\end{eqnarray}
In other words, the total non-educational utility weights and discount factors that people allocate to different activities such as saving, buying, having children, and paying for health care, is represented by S, which we call the \textit{utility weight sum}. 
Optimal parameters given in \eqref{sol} are inversely proportional to utility weight sum, that is, the optimal parameters in \eqref{sol} decrease as $S$ increases. It is worth to note that all those parameters are independent of educational activity, except for $n_t$. On the other hand, optimal parameter \eqref{sol2} help us to determine the ideal share for the educational spending, $\gamma_3$ with respect to $\gamma_2 + \gamma_5$.

Similar trade-offs condition without presence of $\gamma_5$ known as  'Quantity-Quality Trade-off', cf.  \cite{{beckerlewis1973},{Blake1981},{Hanushek1992}}. Thus,  with the presence of $\gamma_5$, we call it  \textit{Modified Quantity-Quality Trade-off} condition. 
In the context of this paper,  we assume that the optimal solution satisfies the \textit{Modified Quantity-Quality Trade-off} requirement, which necessitates that \( \gamma_2 + \gamma_5 \) be greater than \( \gamma_3 \).

Assuming that the optimal distribution of a family's income satisfies Modified Quantity-Quality Trade-off condition; having children and their potential earnings must be given greater weight than the expense of educating each child. 
If Parents would focus all of their efforts on training a small number of children, i.e.,   $\gamma_3 > \gamma_2 + \gamma_5$, which would result in a corner solution where $n_t = 0$. Similar results discussed in \cite{{Moav2005},{Prettner2013}} with absence of $\gamma_5$.  suggested that if the cost of child quality (which is $\gamma_3$ in our model) is sufficiently high, parents would focus on a small number of children. 
Moreover, $\gamma_3 = \gamma_2 + \gamma_5$ lead to the 
singularity situation for optimal value of $e_t$.
Thus the Modified Quantity-Quality Trade-off Condition ensures the an interior solution for maximization problem and a positive optimal number of offspring for $n_t$ and $e_t$.The intuitive notion that reproduction and child-related expenses increase in direct proportion to affluence and child choices is supported by this assumption \cite{{beckerlewis1973},{beckertomes1976}}. The proportionate values of \(\gamma_2\), \(\gamma_5\), and \(\gamma_3\) determine the optimal number of children and the amount of schooling each child receives. From an economic standpoint, the Modified Quantity-Quality Trade-off Condition yields a more relevant model by eschewing a basic corner solution \cite{{RosenzweigWolpin1980}}.

\section{Findings and Analysis}
This section presents and analyzes the key findings derived from our theoretical model, organized into three subsections based on thematic coherence and the nature of the insights provided. The results collectively provide a comprehensive theoretical framework for understanding how parameters included in the model influence various household socioeconomic decisions. The results offer insights into the complex interplay between fertility choices, education investments, savings behavior, and pension planning.
\subsection{Impact of Utility Weights on Household Resource Allocation}
Our analysis begins by examining the fundamental relationships between utility weights and optimal household choices. This first set of results elucidates how variations in preference parameters influence resource allocation across different expenditure categories, with particular emphasis on the distinctive role of education in household decision-making. Analyzing how partial derivatives of optimal decision variables react to shifts in their corresponding utility weight reveals the mechanism by which households adjust to new aspects. These findings do not only help improve comprehension of the choices people make concerning their children but can also elaborate more about preference changes impacts on family financial management.
\begin{lemma}[Optimal Choice Changes with Allocation Weights]
	In a framework of multiple choice decision-making involving consumption, savings, health spending, pension premiums, number of children, and their education, the optimal allocation of resources to each choice is directly influenced by the weights assigned to them.
	\label{lem: 1}
\end{lemma}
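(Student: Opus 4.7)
The plan is to verify the claim by computing, for each decision variable, the partial derivative with respect to its own utility weight and checking its sign, using the closed-form optima \eqref{sol}--\eqref{sol2} together with the Modified Quantity-Quality Trade-off condition $\gamma_2+\gamma_5>\gamma_3$. Since each optimum is an explicit rational expression in the $\gamma_i$'s, the lemma reduces to elementary calculus once the structure is exposed. I would split the argument into two cases that correspond to two qualitatively different dependencies on the weights.

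First I would handle the ``$S$-homogeneous'' variables $c_t$, $s_t$, $p_t$ and $q_t$, each of which has the form $\gamma_i w_t/S$ where the same $\gamma_i$ sits in the numerator and inside the denominator $S=\sum_{j\neq 3}\gamma_j$. A single quotient-rule computation gives
\begin{equation*}
\frac{\partial}{\partial \gamma_i}\!\left(\frac{\gamma_i w_t}{S}\right)=\frac{w_t(S-\gamma_i)}{S^{2}},
\end{equation*}
which is strictly positive because $S-\gamma_i$ equals the sum of the remaining strictly positive weights. Applying this once to each of $(\gamma_1,c_t)$, $(\gamma_6,s_t)$, $(\gamma_4,p_t)$, $(\gamma_7,q_t)$ disposes of four of the six cases.

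Next I would treat $n_t$ and $e_t$, where the relevant weights appear simultaneously in the numerator and in the composite expression $\gamma_2+\gamma_5-\gamma_3$ (and in $S$, for $n_t$). For $n_t$ I differentiate $n_t=(\gamma_2+\gamma_5-\gamma_3)/(\tau S)$ with respect to $\gamma_2$ (and symmetrically $\gamma_5$); the quotient rule yields a numerator $S-(\gamma_2+\gamma_5-\gamma_3)$, which equals $\gamma_1+\gamma_3+\gamma_4+\gamma_6+\gamma_7>0$, so $\partial n_t/\partial \gamma_2>0$ and $\partial n_t/\partial \gamma_5>0$. For $e_t=\gamma_3 w_t \tau/(\gamma_2+\gamma_5-\gamma_3)$ the derivative with respect to $\gamma_3$ simplifies telescopically to $w_t\tau(\gamma_2+\gamma_5)/(\gamma_2+\gamma_5-\gamma_3)^{2}>0$, where the Modified Quantity-Quality Trade-off condition is what keeps the denominator strictly positive and well-defined.

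The main obstacle is not computational but conceptual: one must be careful that the weights $\gamma_2,\gamma_3,\gamma_5$ enter $n_t$ and $e_t$ through the combination $\gamma_2+\gamma_5-\gamma_3$, so the sign of the partial derivative a priori depends on whether this combination is positive. This is precisely where the standing assumption $\gamma_2+\gamma_5>\gamma_3$ introduced after \eqref{sol:s} plays its role, guaranteeing an interior optimum and preventing the singular case that would otherwise invalidate the monotonicity conclusion. Once this is flagged, the remaining steps are routine applications of the quotient rule, and the lemma follows by collecting the six inequalities obtained above.
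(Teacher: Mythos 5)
Your proposal is correct and follows essentially the same route as the paper: differentiate each closed-form optimum with respect to its own utility weight and read off the sign, with the Modified Quantity-Quality Trade-off condition guaranteeing that $n_t$ and $e_t$ are well-defined and interior. (Incidentally, your expression $\frac{w_t\tau(\gamma_2+\gamma_5)}{(\gamma_2+\gamma_5-\gamma_3)^2}$ for $\partial e_t/\partial\gamma_3$ is the correct quotient-rule result; the paper's appendix writes $w_t\tau(1+\gamma_3)$ in the numerator, which appears to be an algebra slip, though both expressions are positive so the lemma's conclusion is unaffected.)
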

\begin{proof}
	Appendix \ref{Appendix: lemma 1 solution}
\end{proof}
The results indicate that increasing the weight \(\gamma_i\), where $i \in (1,7)\setminus \{5\}$, on any particular choice directly increases the utility derived from that choice. To optimize utility subject to the budget constraint, individuals will allocate more resources towards the choice with the higher weight. For instance, a higher \(\gamma_1\) increases the utility from consumption \(c_t\), leading to a higher optimal \(c_t\). This principle holds similarly for \(s_t\), \(p_t\), \(q_t\), \(n_t\), and \(e_t\). The weights \(\gamma_i\) thus play a critical role in determining how resources are distributed among different uses.
The economic implication is that individuals prioritize their spending and saving decisions based on the relative utility weights assigned to different expenditures. Changes in these weights shift the optimal allocation, reflecting the trade-offs individuals are willing to make to maximize their overall utility. This aligns with standard economic theory \cite{{MasColell1995},{Varian2014}}, which postulate that individuals optimize their utility given budget constraints and preferences.
Our next result gives the crucial insights into the unique role of education in resource allocation.
\begin{lemma}[Independence of Optimal Choices from Education Weight]
	In the context of household resource allocation, the weight assigned to child education (\(\gamma_3\)) does not affect the optimal levels of consumption (\(c_t\)), savings (\(s_t\)), health spending (\(p_t\)), and 
	pension premiums (\(q_t\)). This indicates that the decision to invest in child education operates almost independently of other financial decisions within the household.
	\label{lem: 2}
\end{lemma}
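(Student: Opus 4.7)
The plan is to exploit the explicit closed-form solutions already derived in equations~\eqref{sol} and~\eqref{sol:s}, which do the heavy lifting for us; what remains is essentially an inspection argument combined with partial differentiation.

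First, I would recall the four relevant optimal choices,
\begin{equation*}
c_t = \frac{\gamma_1 w_t}{S}, \qquad s_t = \frac{\gamma_6 w_t}{S}, \qquad p_t = \frac{\gamma_4 w_t}{S}, \qquad q_t = \frac{\gamma_7 w_t}{S},
\end{equation*}
together with the definition $S = \sum_{i \in \{1,2,4,5,6,7\}} \gamma_i$ from~\eqref{sol:s}. The key structural observation is that the index $i=3$ is explicitly excluded from the summation defining $S$, so $\partial S / \partial \gamma_3 = 0$. This is the entire content of the lemma, dressed up as a statement about partial derivatives.

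Next, I would formalize the independence by differentiating each of the four expressions with respect to $\gamma_3$. For $c_t$, since the numerator $\gamma_1 w_t$ contains no $\gamma_3$ and $\partial S/\partial\gamma_3 = 0$, the quotient rule gives $\partial c_t / \partial \gamma_3 = 0$. The same argument applies verbatim to $s_t$, $p_t$, and $q_t$, because their numerators $\gamma_6 w_t$, $\gamma_4 w_t$, $\gamma_7 w_t$ are all free of $\gamma_3$. Hence each of these optimal choices is literally constant in $\gamma_3$ (for fixed values of the other weights and of $w_t$).

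Finally, I would briefly contrast this with $n_t$ and $e_t$ in~\eqref{sol} and~\eqref{sol2}, which do depend on $\gamma_3$ both through $S$ and through the term $\gamma_2+\gamma_5-\gamma_3$; this confirms that the lemma is sharp and clarifies the meaning of the phrase \emph{almost independently} in the statement, namely that the coupling between education and the remaining household decisions is mediated solely through the fertility and per-child education channels. There is no substantive obstacle in this proof: the only point requiring care is to acknowledge that $w_t$ itself is taken as exogenous with respect to $\gamma_3$, so that the partial derivatives above are indeed zero rather than merely involving a vanishing chain-rule contribution from $S$.
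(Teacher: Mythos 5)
Your proof is correct and follows essentially the same route as the paper's: inspect the closed forms in \eqref{sol}, note that $\gamma_3$ appears neither in the numerators nor in $S$ (which excludes $i=3$ by \eqref{sol:s}), and conclude that all four partial derivatives vanish. One small slip in your closing aside: $n_t$ depends on $\gamma_3$ only through the numerator $\gamma_2+\gamma_5-\gamma_3$, not ``through $S$,'' since $\partial S/\partial\gamma_3=0$ is precisely the point of the lemma.
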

\begin{proof}
	Appendix \ref{Appendix: lemma 2 solution}
\end{proof}
The zero value of derivatives demonstrates that variations in $\gamma_3$ have no impact on the optimal levels of consumption, savings, health spending, and pension premiums.
This lemma brings out that educational investments can be separated from other financial decisions made by a household; only the value of $\gamma_3$ is significant in determining the total number of children ($n_t$) as well as expenses incurred in educating them, the two being mutually exclusive but having no relationship with any other form of spending money. These findings suggest that educational decisions may be driven by factors distinct from those governing other forms of consumption and savings, which is a reason why a specialized focus on human capital investment is supported by economic models of household behavior.

Furthermore, we calculate following derivatives 
\begin{eqnarray}
	\frac{\partial e_t}{\partial \gamma_1} = \frac{\partial e_t}{\partial \gamma_4} =  \frac{\partial e_t}{\partial \gamma_6} = \frac{\partial e_t}{\partial \gamma_7} = 0.
	\label{eq: results e_t}
\end{eqnarray}
Independence of education expenditure $e_t$ from the utilitarian weights on consumption, health expenditures, savings, and pension premiums is further confirmed. This finding emphasizes how unique the schooling decision variable $e_t$ is in the model as it is only influenced by the weights $\gamma_2, \gamma_3$ and $\gamma_5$, which are associated with the number of children and their potential earnings in the future. This division highlights the special function of education in the model even more, as it influences the financial restriction by influencing the number of children, while the other choices are driven by their respective utility weights and the trade-offs between non-education activities.

Together, Lemma \ref{lem: 1}, Lemma \ref{lem: 2}, and \eqref{eq: results e_t} shows how different household resource allocation becomes when preferences change. These results propose that a number of determinations that especially concern schooling are guided by another set of motives although households generally react logically to their particular taste formations.

\subsection{Interactions Between Pension Preferences, Savings Behavior, \& Family Planning}
This subsection examines how decisions about consumption, health care costs, and fertility made by households are impacted by shifts in pension savings preferences. It looks at how  savings and pension premiums are traded off, as well as how changing priorities impact other important decisions. The results provide a theoretical framework that highlights the need to strike a balance between investments in the next generation, future financial stability, and present consumption. 
\begin{lemma}(Effects of Consumption, Health, and Savings Preferences on Fertility). The partial derivatives of the optimal number of children (\(n_t\)) with respect to the utility weights on consumption (\(\gamma_1\)), health expenditure (\(\gamma_4\)), and savings (\(\gamma_6\)) are equal and negative:
	\label{lem: 3}
\end{lemma}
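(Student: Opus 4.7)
The plan is to work directly from the closed-form expression for $n_t$ given in \eqref{sol}, namely
\[
n_t \;=\; \frac{\gamma_2 + \gamma_5 - \gamma_3}{\tau\, S}, \qquad S = \sum_{\substack{i=1\\ i\neq 3}}^{7}\gamma_i,
\]
and observe that the utility weights $\gamma_1$, $\gamma_4$, and $\gamma_6$ do not appear in the numerator of $n_t$; they enter only through the \textit{utility weight sum} $S$ in the denominator. So the first step is simply to isolate this structural dependence.

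Next, for each $i\in\{1,4,6\}$, I would differentiate $n_t$ with respect to $\gamma_i$ via the chain rule. Since $\partial S/\partial \gamma_i = 1$ for each such $i$ (because $S$ is linear in these weights with unit coefficient), the same quotient-rule calculation yields the common expression
\[
\frac{\partial n_t}{\partial \gamma_i} \;=\; -\,\frac{\gamma_2 + \gamma_5 - \gamma_3}{\tau\, S^{2}}, \qquad i\in\{1,4,6\}.
\]
Equality of the three partial derivatives is immediate from this, because the right-hand side has no remaining dependence on which of $\gamma_1$, $\gamma_4$, or $\gamma_6$ was differentiated with respect to.

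The sign claim is then a direct consequence of the \textit{Modified Quantity-Quality Trade-off Condition} $\gamma_2+\gamma_5 > \gamma_3$ already assumed in the framework: together with $\tau>0$ and $S>0$ (from \eqref{sol:s}), the numerator is strictly positive and the overall expression is strictly negative. I would close with a brief economic interpretation, noting that because the budget is fully allocated, strengthening preferences for any non-fertility, non-education item (consumption, health spending, or savings) crowds out resources available for childbearing in an identical manner, which is exactly what the equality of the three derivatives encodes.

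There is no real technical obstacle here; the statement is a direct consequence of the explicit formula for $n_t$ derived earlier and of the sign assumption imposed by the trade-off condition. The only point requiring care is to make explicit that $\gamma_3$ is excluded from $S$, so differentiating with respect to $\gamma_1$, $\gamma_4$, or $\gamma_6$ (and not $\gamma_3$) treats the numerator as a constant; this is the step that guarantees the three derivatives coincide rather than merely having the same sign.
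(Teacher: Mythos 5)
Your proposal is correct and follows essentially the same route as the paper's own proof: differentiate the closed-form $n_t = (\gamma_2+\gamma_5-\gamma_3)/(\tau S)$ with respect to $\gamma_1,\gamma_4,\gamma_6$, obtain the common value $-(\gamma_2+\gamma_5-\gamma_3)/(\tau S^2)$, and invoke the Modified Quantity-Quality Trade-off condition for the negative sign. Your added remark that the equality (not just the common sign) comes from $\partial S/\partial\gamma_i=1$ for each of these weights is a useful explicit justification that the paper leaves implicit.
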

\begin{proof}
	Appendix \ref{Appendix: lemma 3 solution}
\end{proof}

This lemma captures the complex relationships between fertility decisions and preferences for consumption, health expenditure, and savings in the context of household resource allocation. This result indicates that as households increase their preference for consumption, health expenditures, or savings , they tend to allocate fewer resources towards raising additional children. The negative and equal partial derivatives suggest a uniform trade-off between these preferences and fertility, emphasizing the impact of household budget constraints on family size decisions

Several factors influence these relationships:
\begin{itemize}
	\item \(\tau\): The fixed cost per child, which amplifies the negative effects of \(\gamma_1\), \(\gamma_4\), and \(\gamma_6\) on fertility. Higher costs per child reduce the affordability of additional children, making households more selective about the number they have.
	\item \(\gamma_2\) and \(\gamma_5\): The utility weights on the number of children and their future earnings. Stronger preferences in these areas mitigate the negative effects of \(\gamma_1\), \(\gamma_4\) and \(\gamma_6\), as families that place high value on children's future earnings may be more inclined to have more children despite other costs.
	\item \(\gamma_3\): The utility weight on child education. A higher \(\gamma_3\) enhances the negative effects of \(\gamma_1\), \(\gamma_4\), and \(\gamma_6\) on fertility by increasing the perceived cost of raising children due to educational expenses.
\end{itemize}

This result enriches our understanding of fertility decisions by highlighting their inter-connectedness  with other aspects of household decision-making. It underscores the complexity of demographic changes and provides a theoretical foundation for exploring how evolving societal preferences might influence fertility trends. This interplay is crucial for policy-makers aiming to address demographic shifts and design interventions that balance family well-being with economic stability.
Building on our understanding of fertility decisions, we now turn our attention to how pension preferences influence various household choices.
\begin{lemma}(Effects of Pension Premium Weight on Household Decisions). The partial derivatives of optimal consumption ($c_t$) and health expenditure ($p_t$) with respect to the utility weight on pension premiums ($\gamma_7$) are unconditionally negative. On the other hand,  partial derivatives of optimal number of children ($n_t$) with respect to the utility weight on pension premiums is negative, provided that \textit{Modified Quantity-Quality Trade-off} condition satisfies.
	\label{lem: 4}
\end{lemma}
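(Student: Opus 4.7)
The plan is to read off the optimal values from equation \eqref{sol}, observe that $\gamma_7$ enters these expressions only through the utility weight sum $S$ defined in \eqref{sol:s}, and then differentiate. Since $S = \sum_{i \neq 3} \gamma_i$, we have $\partial S / \partial \gamma_7 = 1$, so the dependence of $c_t$, $p_t$, and $n_t$ on $\gamma_7$ is entirely through the $1/S$ factor (note that $\gamma_7$ does not appear in the numerators of $c_t$, $p_t$, or $n_t$).

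First I would handle $c_t$ and $p_t$. Writing $c_t = \gamma_1 w_t / S$ and $p_t = \gamma_4 w_t / S$, a direct application of the quotient rule (or chain rule on $S^{-1}$) gives derivatives proportional to $-1/S^2$ times a strictly positive numerator. Since $\gamma_1, \gamma_4, w_t > 0$ and $S > 0$ by \eqref{sol:s}, both derivatives are strictly negative without any further assumption. This establishes the unconditional negativity claim.

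Next I would treat $n_t = (\gamma_2 + \gamma_5 - \gamma_3)/(\tau S)$. Again $\gamma_7$ enters only through $S$, so differentiation yields
\begin{equation*}
\frac{\partial n_t}{\partial \gamma_7} \;=\; -\,\frac{\gamma_2 + \gamma_5 - \gamma_3}{\tau S^2}.
\end{equation*}
The sign of this expression is controlled entirely by the sign of $\gamma_2 + \gamma_5 - \gamma_3$, since $\tau, S^2 > 0$. Invoking the Modified Quantity-Quality Trade-off condition $\gamma_2 + \gamma_5 > \gamma_3$ stated earlier, the numerator is positive and the whole derivative is negative, giving the conditional claim.

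Since all three results follow from one-line differentiations of already-derived closed forms, there is no genuine obstacle in the proof. The only point that requires care, and which I would flag explicitly, is why the MQQT hypothesis is essential for the $n_t$ statement but not for $c_t$ and $p_t$: namely, it is precisely the factor $\gamma_2 + \gamma_5 - \gamma_3$ appearing in the numerator of $n_t$ (and absent from $c_t$ and $p_t$) that couples the sign of $\partial n_t / \partial \gamma_7$ to the trade-off condition, consistent with the interior-solution interpretation given in the discussion following \eqref{sol:s}.
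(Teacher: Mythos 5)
Your proposal is correct and follows essentially the same route as the paper's own proof: differentiate the closed-form optima $c_t=\gamma_1 w_t/S$, $p_t=\gamma_4 w_t/S$, $n_t=(\gamma_2+\gamma_5-\gamma_3)/(\tau S)$ with respect to $\gamma_7$ (which enters only through $S$), obtaining $-\gamma_1 w_t/S^2$, $-\gamma_4 w_t/S^2$, and $-(\gamma_2+\gamma_5-\gamma_3)/(\tau S^2)$, with the Modified Quantity-Quality Trade-off condition supplying the sign of the last numerator. Your explicit remark on why the trade-off hypothesis is needed only for $n_t$ is a small but accurate clarification beyond what the appendix states.
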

\begin{proof}
	Appendix \ref{Appendix: lemma 4 solution}
\end{proof}
This result captures the complex interplay between pension premium related decisions and other household choices. The negative relationships arise because an increase in the utility weight placed on pension premiums ($\gamma_7$) leads to a reallocation of resources towards pension contributions, necessitating reductions in other areas of household expenditure.
The negative impact on consumption and health expenditure suggests that households face a trade-off between current consumption and saving for retirement. As the utility weight on pension premiums increases, households are compelled to cut back on their current spending and health investments to allocate more towards pension contributions. This adjustment reflects a substitution effect where future-oriented savings (pension premiums) replace immediate consumption and health-related expenditures.

The reduction in the number of children due to increased pension premiums highlights the broader economic implications. Higher pension contributions may limit available resources for raising children, leading to a decrease in fertility rates. This result underscores the interplay between retirement planning and family planning, illustrating how prioritizing retirement savings can have significant effects on family size and current living standards.
Having examined the effects of pension preferences on consumption, health expenditure, and fertility, we now focus on the critical relationship between pension premiums and  savings.
\begin{theorem}
	[Optimal Savings Decrease with Higher Weight on Pension Premiums]
	An increase in the weight assigned to pension premiums (\(\gamma_7\)) in the household decision-making framework leads to a decrease in the optimal level of savings (\(s_t\)).
	\label{lem: 5}
\end{theorem}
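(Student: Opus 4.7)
The plan is to work directly from the closed-form expression for optimal savings derived in equation \eqref{sol}, namely $s_t = \gamma_6 w_t / S$, where $S = \sum_{i\neq 3}\gamma_i$ as defined in \eqref{sol:s}. Since $s_t$ is an explicit rational function of $\gamma_7$, the claim reduces to a one-line computation of a partial derivative, so no optimization machinery needs to be re-invoked.

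First, I would observe that in the formula $s_t = \gamma_6 w_t/S$, the numerator $\gamma_6 w_t$ does not depend on $\gamma_7$, while the denominator $S$ depends linearly on $\gamma_7$ with $\partial S/\partial \gamma_7 = 1$ (because $\gamma_7$ appears exactly once in the sum defining $S$, and $3\neq 7$ so it is not the excluded index). Applying the quotient rule then yields
\begin{equation*}
\frac{\partial s_t}{\partial \gamma_7} \;=\; \frac{\partial}{\partial \gamma_7}\!\left(\frac{\gamma_6 w_t}{S}\right) \;=\; -\,\frac{\gamma_6 w_t}{S^2}\cdot\frac{\partial S}{\partial \gamma_7} \;=\; -\,\frac{\gamma_6 w_t}{S^2}.
\end{equation*}

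Next, I would verify the sign. By the hypotheses of the model, $\gamma_6 > 0$ (it is a discount factor strictly between $0$ and $1$), the wage $w_t > 0$, and $S > 0$ by \eqref{sol:s}, so $S^2 > 0$. Therefore $\partial s_t/\partial \gamma_7 < 0$, which is precisely the statement that a higher utility weight on pension premiums strictly decreases the optimal savings level. No appeal to the Modified Quantity-Quality Trade-off condition is required here, in contrast to Lemma \ref{lem: 4}, because $s_t$ does not involve the combination $\gamma_2+\gamma_5-\gamma_3$ in its denominator.

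Since the calculation itself is routine, the only real substantive step is the economic interpretation, which I would include briefly after the derivative: the mechanism is purely a budget-share reallocation effect. Increasing $\gamma_7$ inflates the utility weight sum $S$ while leaving $\gamma_6$ fixed, so the share of income channeled into savings shrinks in favor of pension contributions. I do not foresee any genuine obstacle; the main thing to be careful about is merely bookkeeping, i.e., confirming that $\gamma_7$ really does enter $S$ (it does, since the excluded index is $3$) and that all sign conventions on $\gamma_6$, $w_t$, and $S$ are as declared immediately after equations \eqref{eq} and \eqref{sol:s}.
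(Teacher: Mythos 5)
Your proposal is correct and follows essentially the same route as the paper's own proof in Appendix B.5: differentiate the closed-form $s_t = \gamma_6 w_t / S$ with respect to $\gamma_7$ to obtain $-\gamma_6 w_t / S^2 < 0$. Your additional bookkeeping (noting $\partial S/\partial\gamma_7 = 1$ and that the Modified Quantity-Quality Trade-off condition is not needed) is accurate but does not change the argument.
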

\begin{proof}
	Appendix \ref{Appendix: lemma 5 solution}
\end{proof}
When the utility weight on pension premiums (\(\gamma_7\)) increases, households are incentives to allocate a greater portion of their income towards pension contributions, consequently reducing the amount allocated to  savings (\(s_t\)).
From an economic perspective, this inverse relationship underscores the substitution effect between savings and pension premiums. Both savings and pension premiums are methods for ensuring future financial security. However, when households place a higher utility weight on pension premiums, they prioritize contributions to pension funds over  savings.

Key economic insights from this relationship include, that,  higher wages (\(w_t\)) result in a larger decrease in savings when \(\gamma_7\) increases. 
This negative relationship arises because savings ($s_t$) and pension premiums ($q_t$) are substitutes for transferring resources across time periods to support future consumption. This result emphasizes how shifts in pension preferences influence not only the amount of current savings but also the overall financial planning of households. As more income is diverted towards pensions, less is available for other savings goals, impacting long-term financial security and consumption patterns.
As individuals are incentivized to pay more into pensions, they optimally reduce their private savings levels, with the magnitude depending on preferences and income levels. The lemma quantifies this trade-off, providing an insightful component of understanding optimal life-cycle behavior across savings and retirement decisions.

Overall, these findings provide a comprehensive understanding of how household decisions are intertwined with pension savings preferences. The results reflect the broader economic implications of shifting priorities, emphasizing the need for a balanced approach to retirement planning and family investment. The interplay between pension contributions, savings behavior, and family planning offers valuable insights for designing policies that address the challenges posed by demographic changes and aging populations.

\subsection{Education-Fertility Trade-off and Future Earnings Impact: Household Responses to Demographic Change}
This subsection illustrates the relationship between education and fertility, with children's potential future incomes exerting an influence on households' decisions. Further, it looks at how households respond to shifting demographic pressures and shifting preferences by adjusting how they allocate their resources. It looks at how future wages for children affect savings, consumption, and fertility decisions.  
The results have significant implications for family planning programs, education policies, and long-term economic growth strategies in countries undergoing demographic transitions.
\begin{prop}[Optimal Number of Children Decreases with Higher Weight on Child Education]
	An increase in the utility weight on child education ($\gamma_3$) leads to a decrease in the optimal number of children ($n_t$).
	\label{lem: 6}
\end{prop}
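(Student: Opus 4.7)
The approach is a direct differentiation of the closed-form solution already derived in equation~\eqref{sol}: $n_t = (\gamma_2 + \gamma_5 - \gamma_3)/(\tau S)$. The crucial structural observation is that the utility weight sum $S$, as defined in~\eqref{sol:s}, is taken over all indices $i \in \{1,\dots,7\}$ \emph{except} $i=3$, so $S$ is functionally independent of $\gamma_3$. This makes the dependence of $n_t$ on $\gamma_3$ entirely explicit and isolated in the numerator.

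First I would rewrite $n_t$ as a function of $\gamma_3$ with all other parameters held fixed, noting that only the term $-\gamma_3$ in the numerator carries a $\gamma_3$-dependence. Next I would compute $\partial n_t / \partial \gamma_3 = -1/(\tau S)$. Since $\tau > 0$ by assumption on the fixed child-rearing cost and $S > 0$ by~\eqref{sol:s}, this partial derivative is strictly negative, which is exactly the monotonic decrease claimed by the proposition. I would then record a brief economic interpretation: a higher preference weight on child education draws resources toward per-child investment and away from family size, with the magnitude $1/(\tau S)$ attenuated by larger per-child costs $\tau$ and by a larger aggregate non-education weight sum $S$.

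The one point that deserves care, though it is more an interpretive check than a genuine obstacle, is to verify that we are differentiating inside the regime where the interior solution is meaningful. The Modified Quantity-Quality Trade-off condition $\gamma_2 + \gamma_5 > \gamma_3$ adopted in Section~2.1 guarantees $n_t > 0$, so the comparative-static statement refers to a genuine marginal reallocation rather than a boundary artifact. I would close the proof by noting that the sign analysis remains valid throughout this interior region since the denominator $\tau S$ is independent of $\gamma_3$ and the numerator is strictly monotone.

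Overall, this is essentially a three-line calculation with no technically demanding step; the main intellectual content is the structural observation that $\gamma_3$ has been excluded from $S$, which makes the comparative statics clean and one-signed. If pressed to identify a main obstacle, it would only be to flag that Proposition~\ref{lem: 6} is complementary to Lemma~\ref{lem: 2}: the weight $\gamma_3$ has no effect on $c_t$, $s_t$, $p_t$, $q_t$, but does move $n_t$ (and, via~\eqref{sol2}, $e_t$ in the opposite direction), so the entire adjustment to increased educational preference is absorbed within the fertility–education margin itself.
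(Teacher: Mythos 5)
Your proposal is correct and follows exactly the paper's own argument: differentiate $n_t = (\gamma_2+\gamma_5-\gamma_3)/(\tau S)$ with respect to $\gamma_3$, use the fact that $S$ excludes $\gamma_3$, and conclude $\partial n_t/\partial\gamma_3 = -1/(\tau S) < 0$ since $\tau>0$ and $S>0$. The additional remarks on the interior-solution regime and the connection to Lemma~\ref{lem: 2} are sound but not needed for the sign argument.
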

\begin{proof}
	Appendix \ref{Appendix: lemma 6 solution}
\end{proof}
This result highlights a critical aspect of household economic behavior: the trade-off between the number of children and the investment in each child's education. Economically, this inverse relationship is driven by the limited resources available to households. As parents place greater importance on education (higher $\gamma_3$), they must allocate more of their budget towards educational expenses, which reduces the funds available for raising additional children. This decision is influenced by:

\textit{Fixed Cost of Children ($\tau$)}: Higher fixed costs per child mean that as education becomes a higher priority, the financial burden of maintaining each child increases, leading to a preference for fewer children.

\textit{Resource Allocation}: Households with constrained budgets need to make trade-offs between the quantity and quality of children. As the utility weight on education increases, households allocate more resources to fewer children to enhance their future earning potential.

This result is crucial for understanding how changes in parental preferences and economic conditions can impact fertility decisions. It shows that policy measures aimed at increasing educational investments may inadvertently lead to lower fertility rates, as parents balance the quality of education against the quantity of children. The result emphasizes the importance of considering both educational policies and family planning strategies in tandem to achieve desired demographic and educational outcomes.
The negative sign indicates that as the utility weight on child education ($\gamma_3$) increases, the optimal number of children ($n_t$) decreases. This is intuitive because a higher weight on education implies a preference for fewer children but higher investment in each child's education.
The magnitude of the decrease in $n_t$ is inversely proportional to the fixed cost of children ($\tau$) and the sum of non-education utility weights ($S$). This is possible because a higher fixed cost or higher non-education utility weights would lead to a smaller trade-off between child quantity and quality.
Complementing our previous finding, we now investigate the reciprocal relationship between the preference for larger families and educational investment.
\begin{prop}[Optimal Education Expenditure Decreases with Higher Weight on Number of Children]
	An increase in the utility weight assigned to the number of children ($\gamma_2$) leads to a decrease in the optimal education expenditure per child ($e_t$).
	\label{lem: 7}
\end{prop}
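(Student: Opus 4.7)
The plan is to differentiate the closed-form expression for $e_t$ obtained earlier in equation (5) with respect to $\gamma_2$ and verify that the sign of the resulting derivative is unambiguously negative, under the standing Modified Quantity-Quality Trade-off condition introduced after equation (6).

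First, I would recall from equation (5) that the optimal per-child education expenditure is
$$e_t = \frac{\gamma_3\, w_t\, \tau}{\gamma_2 + \gamma_5 - \gamma_3},$$
and observe that $\gamma_2$ enters only through the denominator, while the numerator $\gamma_3 w_t \tau$ is constant with respect to $\gamma_2$. A direct differentiation then gives
$$\frac{\partial e_t}{\partial \gamma_2} \;=\; -\,\frac{\gamma_3\, w_t\, \tau}{(\gamma_2 + \gamma_5 - \gamma_3)^2}.$$
Next, I would examine the sign of each factor: by the framework assumptions $\gamma_3 > 0$, $\tau > 0$, and the wage $w_t > 0$, so the numerator on the right-hand side is strictly positive, while the squared denominator is nonnegative. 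The conclusion $\partial e_t / \partial \gamma_2 < 0$ then follows at once, establishing the proposition.

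The only genuine subtlety, and the step I would flag explicitly, is ensuring that the differentiation itself is legitimate, i.e., that the denominator does not vanish. This is precisely guaranteed by the Modified Quantity-Quality Trade-off Condition $\gamma_2 + \gamma_5 > \gamma_3$, which the earlier discussion adopts as a standing hypothesis to rule out the corner/singular case $n_t = 0$ or $e_t$ undefined. I would therefore open the proof by invoking this condition to conclude $(\gamma_2 + \gamma_5 - \gamma_3)^2 > 0$, and only then carry out the single-line derivative computation above. Beyond this, the argument is a routine application of the quotient rule, with no further technical obstacles.
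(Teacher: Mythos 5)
Your proposal is correct and follows essentially the same route as the paper's own proof: differentiate the closed-form expression $e_t = \gamma_3 w_t \tau/(\gamma_2+\gamma_5-\gamma_3)$ with respect to $\gamma_2$ and observe that the resulting quantity $-\gamma_3 w_t \tau/(\gamma_2+\gamma_5-\gamma_3)^2$ is negative. Your explicit invocation of the Modified Quantity-Quality Trade-off condition to guarantee the denominator is nonzero is a small but welcome addition of rigor that the paper leaves implicit.
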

\begin{proof}
	Appendix \ref{Appendix: lemma 7 solution}
\end{proof}
This result articulates a fundamental economic trade-off between the quantity and quality of children within household decision-making. As the utility weight on the number of children ($\gamma_2$) increases, parents are incentivized to have more children. Given a fixed household budget, this increased preference for a higher number of children leads to a reduction in the resources allocated per child, specifically for education expenditure ($e_t$).
Key economic insights from this relationship include:
\begin{itemize}
	\item  \textit{Income Effect}: Higher household income generally leads to higher overall education expenditure. However, as $\gamma_2$ increases, the decrease in $e_t$ becomes more pronounced due to the larger initial allocation to education.
	\item \textit{Preference Impact}: A higher utility weight on education ($\gamma_3$) implies more resources are initially allocated to education expenditure. Therefore, the reduction in $e_t$ is amplified when $\gamma_2$ increases.
	\item  \textit{Fixed Costs}: Higher fixed costs per child ($\tau$) reduce the resources available for both the quantity and quality of children, influencing the magnitude of the decrease in $e_t$.
	\item  \textit{Quantity-Quality Trade-off}: The modified trade-off condition reflects how increased preference for child quantity impacts the allocation of educational resources.
\end{itemize}
This result demonstrates how shifts in preferences towards having more children necessitate adjustments in educational investments, providing a clear quantitative measure of this trade-off. It is critical for understanding optimal fertility and human capital investment strategies within the model.
Expanding our analysis to consider the long-term perspective, we now examine how the perceived importance of children's future earnings influences current household decisions.
\begin{theorem}[Impact of Children's Future Earnings Weight on Household Decisions]
	An increase in the weight assigned to children's future earnings  (\(\gamma_5\)) in the household decision-making framework leads to a decrease in the optimal level of current consumption (\(c_t\)), savings (\(s_t\)), health spending (\(p_t\)), or pension premiums (\(q_t\)), while also leading to an increase in the number of children (\(n_t\)).
	\label{lem: 8}
\end{theorem}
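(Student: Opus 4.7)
The plan is to work directly from the closed-form optima in \eqref{sol} and \eqref{sol2}, differentiating each expression with respect to $\gamma_5$. The key structural observation, forced by the definition \eqref{sol:s} of $S$, is that $\gamma_5$ enters $S$ additively with coefficient one (since $5\neq 3$), while it appears in the numerators of only $n_t$ and $e_t$. Consequently, for $c_t, s_t, p_t, q_t$ the dependence on $\gamma_5$ is confined entirely to the denominator $S$, which makes the sign of each derivative immediate and uniform in character.

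First I would dispatch the four ``denominator-only'' quantities in a single stroke. Each of $c_t, s_t, p_t, q_t$ has the shape $\gamma_i w_t/S$ for some $i\in\{1,6,4,7\}$, so $\partial/\partial\gamma_5$ of that expression equals $-\gamma_i w_t/S^{2}$. Since every $\gamma_i$, $w_t$, and $S$ is strictly positive by hypothesis, all four derivatives are strictly negative, which already settles the consumption, savings, health, and pension half of the theorem.

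The heart of the argument --- and the step where I expect the main (small) obstacle --- is the sign of $\partial n_t/\partial\gamma_5$. Here both the numerator $\gamma_2+\gamma_5-\gamma_3$ and the denominator $\tau S$ depend on $\gamma_5$, so the quotient rule yields $\partial n_t/\partial\gamma_5 = [S-(\gamma_2+\gamma_5-\gamma_3)]/(\tau S^{2})$ and the question becomes whether the ``numerator pull'' beats the ``denominator drag.'' To close this I would expand $S=\gamma_1+\gamma_2+\gamma_4+\gamma_5+\gamma_6+\gamma_7$, cancel $\gamma_2$ and $\gamma_5$, and read off the residue $\gamma_1+\gamma_3+\gamma_4+\gamma_6+\gamma_7$, which is strictly positive because every weight is positive. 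So the sign is $+$ \emph{unconditionally} --- notably, this positivity does not even require the Modified Quantity--Quality Trade-off condition.

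Finally I would remark that the trade-off condition $\gamma_2+\gamma_5>\gamma_3$ is still invoked contextually: it guarantees $n_t>0$ at the operating point, so that ``increase in the number of children'' is a meaningful statement about movement along an interior branch rather than across a corner. Collecting the four negative derivatives from the uniform ``denominator-only'' computation together with the single positive derivative from the quotient-rule step then delivers all five sign claims of the theorem.
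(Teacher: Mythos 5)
Your proposal is correct and follows essentially the same route as the paper's proof: differentiate the closed-form optima in \eqref{sol} with respect to $\gamma_5$, obtaining $-\gamma_i w_t/S^2<0$ for $c_t,s_t,p_t,q_t$ and a positive quotient-rule expression for $n_t$. One small point in your favor: your numerator $S-(\gamma_2+\gamma_5-\gamma_3)=\gamma_1+\gamma_3+\gamma_4+\gamma_6+\gamma_7$ is the correct one (consistent with the paper's own computation of $\partial n_t/\partial\gamma_2$ in the proof of Lemma~\ref{lem: 1}), whereas the appendix proof of this theorem and Table~\ref{tab:partial_derivatives} write $S-(\gamma_2+\gamma_5)$, dropping the $+\gamma_3$ term --- a slip that does not affect the sign, and your remark that positivity here needs no trade-off condition (only $n_t>0$ does) is also accurate.
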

\begin{proof}
	Appendix \ref{Appendix: lemma 8 solution}
\end{proof}
As the weight \(\gamma_5\) on the utility derived from children's future earnings increases, households are incentivized to reallocate resources away from current consumption (\(c_t\)), savings (\(s_t\)), health spending (\(p_t\)), or pension premiums (\(q_t\)), towards increasing the quantity (\(n_t\)) and quality (\(e_t\)) of children. This reallocation is necessitated by the budget constraint, as higher investments in children's education and future earnings reduce the available resources for present consumption, savings,  health spending or pension premiums.

The negative sign of the partial derivatives \(\frac{\partial c_t}{\partial \gamma_5}\),  \(\frac{\partial s_t}{\partial \gamma_5}\),  \(\frac{\partial p_t}{\partial \gamma_5}\),  \(\frac{\partial q_t}{\partial \gamma_5}\) signifies that emphasizing children's future earnings in the household's utility function leads to a decline in optimal current consumption, savings, health spending or pension premiums. This result aligns with the notion that households are willing to sacrifice resources to enhance their children's human capital and future earnings potential when these factors are valued more highly. Whereas the  partial derivative of the optimal number of children (\(n_t\)) with respect to \(\gamma_5\) is positive which shows the  number of children (\(n_t\)) increases, suggesting that higher valuation of children's future earnings incentivizes having more children.

This lemma highlights the critical tradeoffs faced by households in resource allocation, demonstrating how increasing the importance placed on children's future earnings (\(\gamma_5\)) influences the optimal choice of  other resources.
These results collectively demonstrate that an increased emphasis on children's future earnings leads to a significant reallocation of household resources. Families reduce current consumption and other expenditures to invest more in both the quantity and quality of children, anticipating higher returns from their children's future earnings. This finding underscores the complex interplay between fertility decisions, human capital investment, and intertemporal resource allocation in household decision-making.

\subsection{Comprehensive Overview of Partial Derivatives of Optimal Solutions}
In order to present a complete view of what the model implies, we provide a detailed table of first order partial derivatives. This analysis summarizes in brief how changes in the rates of utility (\(\gamma_i\)) lead to variations in different household decision parameters. For the notation simplification we use $\frac{w_t}{S^2}  = \beta.$ We tabulate all the results in Table \ref{tab:partial_derivatives}. 
\renewcommand{\arraystretch}{2.14} 
\begin{table}[h]
		\caption{Partial Derivatives of Decision Parameters with Respect to $\gamma_i$}
	\centering
	\begin{tabular}{c||c|c|c|c|c|c|c|}
		& $c_t$ & $n_t$ & $e_t$& $p_t$ & $s_t$ & $q_t$ \\ \hline 
		$\gamma_1$ & $\frac{S-(\gamma_1+\gamma_3)}{\beta^{-1}}$& $-\frac{(\gamma_2 + \gamma_5 - \gamma_3 )}{\tau S^2}$ & 0 & $- \beta\gamma_4$  & $-\beta\gamma_6$& $- \beta\gamma_7$ \\
		$\gamma_2$ & $-\beta \gamma_1$ & $\frac{ S-(\gamma_2+\gamma_5)}{\tau S^2}$ & $-\frac{w_t \gamma_3 \tau}{(\gamma_2 + \gamma_5 - \gamma_3)^2}$& $-\beta \gamma_4$ & $-\beta \gamma_6$ & $-\beta\gamma_7$ \\
		$\gamma_3$ & 0 & $-\frac{1}{\tau S}$ & $\frac{w_t \tau(1+\gamma_3)}{(\gamma_2 + \gamma_5 - \gamma_3)^2}$ & 0 & 0 & 0 \\
		$\gamma_4$ & $-\beta \gamma_1$& $-\frac{(\gamma_2 + \gamma_5 - \gamma_3)}{\tau S^2}$ & 0 & $\frac{S-(\gamma_4+\gamma_3)}{\beta^{-1}}$ & $-\beta \gamma_6$& $-\beta \gamma_7$ \\
		$\gamma_5$ & $-\beta \gamma_1$ & $\frac{ S-(\gamma_2+\gamma_5)}{\tau S^2}$ & $-\frac{w_t \gamma_3 \tau}{(\gamma_2 + \gamma_5 - \gamma_3)^2}$& $-\beta \gamma_4$ & $-\beta \gamma_6$ & $-\beta \gamma_7$ \\
		$\gamma_6$ & $-\beta \gamma_1$& $-\frac{(\gamma_2 + \gamma_5 - \gamma_3)}{\tau S^2}$ & 0& $-\beta\gamma_4$  & $\frac {S-(\gamma_6+\gamma_3)}{\beta^{-1}}$ & $-\beta \gamma_7$ \\
		$\gamma_7$ & $-\beta \gamma_1$ & $-\frac{(\gamma_2 + \gamma_5 - \gamma_3)}{\tau S^2}$ & 0& $-\beta \gamma_4$ & $-\beta \gamma_6$& $\frac{S-(\gamma_7+\gamma_3)}{\beta^{-1}}$ 
	\end{tabular}
	\label{tab:partial_derivatives}
\end{table}

Some key additional insights include:
\begin{enumerate}
	\item \textbf{Symmetry in impacts}: There is a distinct resemblance in how alterations in utility weights alter diverse decision parameters. For example, an increase in consumption ($\gamma_1$) negatively affects savings ($s_t$), health spending ($p_t$), and pension premiums ($q_t$) equally.
	
	\item \textbf{Unique impact of educational weight}: Educational weight ($\gamma_3$) is the only parameter that does not directly influence consumption, savings, health expenditures, or pension premiums. This supports earlier findings regarding education decisions being relatively autonomous.
	
	\item \textbf{Cross-impacts}: Cross-impacts between different utility weights and decision parameters are displayed in the table. For instance, a rise in savings weight ($\gamma_6$) has a negative consequence on consumption, health expenditure, and number of children, but on the contrary, has positive effects on the pension premiums.
	
	\item \textbf{Magnitude of effects}: Comparative analysis of effect sizes across various parameters provides insight into which utility weights have the greatest impact on particular choices.
\end{enumerate}
In our study, the tabular structure is essential since it simplifies comprehension, unifies all mathematical equations, and makes it easier to compare utility weights and decision factors. It facilitates the identification of patterns and symmetries in a model and acts as a resource for scholars and decision-makers to comprehend how preferences influence household decisions.

\section{Policy Implications and Recommendations}
In this paper, we have identified several key findings that  include the trade-off between the number and quality of children, changes in private savings and pension contributions, and the influence of early childhood investments on long-term outcomes. Based on these results, we propose a set of policies to address the socio-economic impacts. These policy recommendations are grounded in both our findings and existing literature.

\subsection{Enhancing Education Systems and Human Capital}
The observed trade-off between the number of offspring and their well-being, consistent with \cite{{Becker1960}} postulate, suggests that improving human capital through education is crucial. To address this, we recommend the following measures
\begin{itemize}
	\item Educate and increase subsidized education programs, such as free primary and secondary education for all children, and scholarships for higher education. Consequently, people can study regardless of their parents' financial status. In India, for instance, the Right of Children to Free and Compulsory Education (RTE) Act was introduced in 2009, but implementation and reach are still weak \cite{{ministry2020}}. On the other hand, these higher education schemes like Prime Minister's Special Scholarship Scheme (PMSSS) should have more students and disciplines involved \cite{{ministry2021a}}.
	
	\item Deliberate on early childhood program initiatives aimed at stimulating cognitive, social, and affective growth. Indeed, the National Education Policy 2020 of India is an indication for substantial commitment to early childhood care and education (ECCE) through its insistence on ECCE being part of the formal educational system \cite{{ministry2020}}. Research from the international arena supports this proposition as it reveals that there are considerable advantages of starting these measures before a child enters school \cite{{Heckman2006},{OECD2017}}.
	\item Enhancing Science, Technology, Engineering, and Mathematics (STEM) and vocational training to prepare students for future job markets. The Skill India Mission being promoted by the Government of India is to train more than four hundred million individuals in several different skills before the year 2022. This initiative should be improved and aligned with global best practices that address the changing needs of labor markets \cite{{ILO2019},{WEF2020}}.
\end{itemize}

\subsection{Strengthening Financial Support for Families}
The financial constraints and family planning decisions interact in a complex way. Therefore, we propose some financial incentives in order to ease families economic burden and motivate spending on children's growth. These recommendations seek to establish an enabling environment for families who are making qualitative-quantitative choices regarding children.
\begin{itemize}
	\item A way to lessen the weight of parents is by coming up with more education and child tax credits. Although in India there are schemes such as Sukanya Samriddhi Yojana which is meant to promote saving for the education of girl children, it would also be possible to have comprehensive education expense tax credits \cite{{ministry2021a},{UNESCO2020},{WorldBank2018}}.
	
	\item Encourage employers to contribute to savings plans and promote flexible working conditions. India's New Labour Codes, especially the Code on Social Security, 2020 gives a structure for improving social security benefits \cite{{MoLE2020}}. These should be effectively implemented and extended to include flexible working arrangements as per global best practices \cite{{Eurofound2017},{ILO2018}}.
\end{itemize}

\subsection{Expanding Access to Affordable Healthcare}
Access to quality healthcare becomes an important aspect that influences family decision-making especially with the decline in birth rates. Therefore, this part suggests several policies aimed at improving accessibility and affordability of healthcare related to preventive care as well as total maternal and child health service delivery.
\begin{itemize}
	\item Increase endeavors to achieve universal access to health care and put more resources into preventive health initiatives. The Ayushman Bharat program in India is targeting 500 million beneficiaries for health insurance coverage \cite{{ministry2021c}}. As per WHO guidelines on universal health coverage \cite{{MorenoSerra2012},{WHO2019}}, there should be ongoing evaluations and expansion efforts under this scheme so that it can cater for even more people in the population.
	\item Bolster an all-encompassing approach to maternal and child health service delivery, including food and nutrition initiatives. Although there have been significant improvements with regard to this in India's National Heath Mission, malnutrition is still an area requiring attention \cite{{MoWCD2021}}. In addition to drawing from global models \cite{{Black2017},{UNICEF2019}}, such actions need to be ramped up.
\end{itemize}

\subsection{Securing Retirement and Reforming Pension Systems}
Our research reveals the complex connections that exist among pension choices, individual financial reserves as well as decisions about having children. Based on these revelations, we suggest a number of changes in pensions and savings that can improve retirement security in accordance with changing demographics and household expenditure patterns.
\begin{itemize}
	\item Choose plans with variable rates or charge enrollees a fixed fee. This will guarantee good returns through different investment options in every market situation. The National Pension System (NPS) is the first rung but needs more coverage and efficiency in its implementation \cite{{PFRDA2021}}. In designing and putting into practice various pensions systems, there are potential lessons that could come from several countries' experiences \cite{{beshears2009},{WorldBank2019}}.
	\item Private savings should have more incentives while comprehensive financial literacy programs should be in place. These acts should be magnified and diversified on the backdrop of worldwide proof regarding the significance of financial education \cite{{Lusardi2014},{OECDINFE2020}}.
\end{itemize}

\subsection{Labor Market Policies}
The results of this research have far-reaching consequences for labor markets as it accounts for fertility changes and the investment of human capital in relation to changing workforce requirements. This part presents some policy suggestions aimed at matching educational achievements with anticipated demand for jobs while at the same time ensuring better work-life balance through the greater economic implications created by demographic transitions.
\begin{itemize}
	\item Make sure that educational plans as well as vocational training initiatives fit anticipated future job demands. The main objective of the Indian National Skill Development Mission is to bring together skill-building efforts from all corners of the country. This should be regularly modified depending on job forecasts and worldwide developments \cite{{McKinsey2018},{ILO2019}}.
	\item Policies that support work-life balance should be put in place and reinforced, among them parental leave and flexible working conditions. India has made maternity leave compulsory but there is room for paternity leave and more flexible working circumstances. Drawing from global best practices may help these policies to assist people in balancing their careers with family obligations \cite{{OECD2016},{Thevenon2013}}.
\end{itemize}
The policy recommendations outlined above address the key socio-economic impacts identified in this paper. By promoting education, providing financial incentives, enhancing healthcare access, and strengthening retirement security, policymakers can create a supportive environment that encourages families to invest in their children's development and ensure long-term economic stability. These policies not only address the immediate challenges posed by falling fertility rates but also  contributing to sustainable economic development and improved quality of life for future generations.

\section{Discussion}
This research explored the intricate relationship between reproductive choices, educational spending, savings habits, and retirement planning in the setting of dwindling birth rates. Our theoretical framework offers valuable perspectives that help us understand how families organize their resources in response to fluctuating population patterns.
Findings obtained by us reveal the significant trade-offs between child quantity and quality that were consistent with \cite{{Becker1960}} seminal work. Hence, parents would ideally have fewer children when focusing on educational matters (having higher values for $\gamma_3$), but spend more resources in educating their offspring. This finding can be related to previous studies done by \cite{{Blake1981},{Hanushek1992}} which give a numerical perspective of how changes in parental preferences affect fertility rates and human capital investments.

Interestingly, our model suggests that the decisions made about education expenditure are more or less autonomous from other decisions regarding household finance. Such a separation infers a different set of factors that motivate educational investments compared to those affecting consumption and hence savings, thus justifying the specialized emphasis on human capital investment within economic models of household behavior. Our study also discovered a complicated interplay between preferences for pension schemes and personal savings. Since families assign more importance to pension premiums, their levels of private saving and present-day consumption will drop. This trade-off in pension contributions as opposed to private saving expands earlier research by \cite{{Attanasio2003}}, on pension asset accumulation as well as home savings. Our results suggest that policy changes affecting pension systems may have unintended consequences on private savings behavior.

In an unexpected way, our model suggests that by giving priority to future earnings of children, higher fertility rates would be realized. This contradicts some current theories that predict declining fertility due to parents investing more in their child's human capital \cite{{GalorWeil2000}}. A possible reason for this result is that parents see having many children and thus a large family as a sort of investment in expected family income, especially where inter-generational transfers are important.
It is important to acknowledge the limitations of our study. Our theoretical layout is comprehensive to some extent; however, it is based on certain simplifications. Hence decisions made in the real world are probably more complicated than those in our model, thus leaving out such things as cultural patterns, policy settings, and economic uncertainties. Moreover, our model posits a single household decision-making procedure characterized by it may not be able to reflect fully intra-household bargaining dynamics.
However, these constraints have important ramifications for government action and future studies. The close relationship between educational choices and childbearing choices implies that measures intended to raise the standard of education may inadvertently affect birth rates. As such, these possible compromises should be borne in mind by policymakers while establishing programs relating to both schooling as well as controlling family growth. Moreover, our findings emphasize that it is necessary to adopt a unified approach towards planning for retirement as well as family assistance systems. Since pension choices affect both monetary saving actions and reproductive choices, any policy affecting pension schemes can produce outspill financial strategies.

Our model might be extended in future research by including additional practical assumptions that relate to the traditional setting of household decision-making processes, such as intra-household negotiation and ambiguity. In particular, it would be useful to test our model predictions in various cultural and economic settings through empirical studies. In addition, understanding how these changes in technology and job markets affect the interconnections we discovered would inform policymakers' long-term strategies.

To sum up, this research offers an all-encompassing theoretical structure for making sense of household resource distribution when fertility preferences are changing. In shedding light on the intricate linkages between education finance, saving choices, pension schemes, and procreation choices, we add to a deeper comprehension of demographic shifts and their economic effects. Such knowledge can aid in developing better-rounded policies that respond adequately to declining birth rates in numerous nations.

\section*{Declarations}
The authors have no competing interests to declare that are relevant to the content of this article. Both the authors agreed to publish this article.

\begin{appendices}
	\section{Solution of Maximization problem}
	\label{appendix I}
	Construct the Lagrangian functional by introducing a Lagrange multiplier $\lambda$, as given below
	\begin{align*}
		L &= \gamma_1 \ln(c_t) + \gamma_2 \ln(n_t) + \gamma_3 \ln(e_t) + \gamma_4 \ln(p_t) 
		+ \gamma_5 \ln(n_t w_{t+1}) + \gamma_6 \ln(R_{t+1} s_t) + \gamma_7 \ln(R_{t+1}^p q_t) \\
		&\quad + \lambda \left( w_t (1 - \tau n_t) -  (c_t + s_t + e_t n_t + p_t + q_t) \right)
	\end{align*}
	Subsequently, we compute partial derivatives of the Lagrangian with respect to each decision variable and the Lagrange multiplier and set them equal to zero, i.e., first order condition:
	\[
	\begin{aligned}
		\frac{\partial L}{\partial c_t} &= \frac{\gamma_1}{c_t} - \lambda = 0 \\
		\frac{\partial L}{\partial n_t} &= \frac{\gamma_2 + \gamma_5}{n_t} - \lambda (\tau w_t + e_t) = 0 \\
		\frac{\partial L}{\partial e_t} &= \frac{\gamma_3}{e_t} - \lambda  n_t = 0 \\
		\frac{\partial L}{\partial p_t} &= \frac{\gamma_4}{p_t} - \lambda = 0 \\
		\frac{\partial L}{\partial s_t} &= \frac{\gamma_6}{s_t} - \lambda = 0 \\
		\frac{\partial L}{\partial q_t} &= \frac{\gamma_7}{q_t} - \lambda = 0 \\
		\frac{\partial L}{\partial \lambda} &= w_t (1 - \tau n_t) -  (c_t + s_t + e_t n_t + p_t + q_t) = 0
	\end{aligned}
	\]
	By eliminating $\lambda$ from above, we obtain
	\[
	\frac{\gamma_1}{c_t} = \frac{\gamma_2 + \gamma_5}{n_t (\tau w_t + e_t)} = \frac{\gamma_3}{e_t  n_t} = \frac{\gamma_4}{p_t} = \frac{\gamma_6}{s_t} = \frac{\gamma_7}{q_t}
	\]
	
	Solving these equations simultaneously yield the optimal values as follow
	\begin{align*}
		c_t = \frac{\gamma_1 w_t}{S},  \quad
		s_t = \frac{\gamma_6 w_t}{S}, \quad
		p_t = \frac{\gamma_4 w_t}{S}, \quad
		q_t = \frac{\gamma_7 w_t}{S}, \quad  
		e_t = \frac{\gamma_3 w_t\tau}{\gamma_2+\gamma_5-\gamma_3}, \quad
		n_t =  \frac{ \gamma_2 + \gamma_5 - \gamma_3}{\tau S},   
	\end{align*}
	where $S = \sum_{i=1, i\neq 3}^{7} \gamma_i$. Note that $S$ is non-negative number. 

	\section{Proof of lemmas, propositions and theorems}
	\subsection{Proof of Lemma \ref{lem: 1}}
	\label{Appendix: lemma 1 solution}
	\begin{proof}
		Take the partial derivative of optimal solutions, i.e., \eqref{sol} and \eqref{sol2}  with respect to its corresponding weight \(\gamma_i\), where $i \in (1,7)\setminus \{5\}$, mentioned in \eqref{eq:max eq}. That leads to
		\begin{eqnarray*}
			\frac{\partial c_t}{\partial \gamma_1} &=& \frac{w_t (\gamma_2 + \gamma_4 + \gamma_5 + \gamma_6 + \gamma_7)}{S^2} > 0, \qquad
			\frac{\partial p_t}{\partial \gamma_4} \quad = \quad \frac{w_t (\gamma_1 + \gamma_2 + \gamma_5 + \gamma_6 + \gamma_7)}{S^2} > 0,\\
			\frac{\partial s_t}{\partial \gamma_6} &=& \frac{w_t (\gamma_1 + \gamma_2 + \gamma_4 + \gamma_5 + \gamma_7)}{S^2} > 0,
			\qquad
			\frac{\partial q_t}{\partial \gamma_7} \quad = \quad \frac{w_t (\gamma_1 + \gamma_2 + \gamma_4 + \gamma_5 + \gamma_6)}{S^2} > 0.
			\\
			\frac{\partial n_t}{\partial \gamma_2} &=& \frac{\gamma_1 + \gamma_3 + \gamma_4 + \gamma_6 + \gamma_7}{\tau (S)^2} > 0, \qquad \hspace{0.65cm}
			\frac{\partial e_t}{\partial \gamma_3} \quad = \quad \frac{w_t \tau (1 + \gamma_3)}{(\gamma_2 + \gamma_5 - \gamma_3)^2} > 0.
		\end{eqnarray*}
		The results are direct from $\gamma_i > 0$, $\tau >0$ and $w_t > 0$. 
	\end{proof}

	\subsection{Proof of Lemma \ref{lem: 2}}
	\label{Appendix: lemma 2 solution}
	\begin{proof}
		Take the derivatives of optimal solutions provided in \eqref{sol} with respect to \(\gamma_3\) to obtain the following
		\[
		\frac{\partial c_t}{\partial \gamma_3} = \frac{\partial s_t}{\partial \gamma_3} = \frac{\partial p_t}{\partial \gamma_3} = \frac{\partial q_t}{\partial \gamma_3} = 0.
		\]
		Since \(\gamma_3\) does not appear in the expressions for the optimal \(c_t\), \(s_t\), \(p_t\), and \(q_t\), variations in \(\gamma_3\) have no impact on these choices.
	\end{proof}

	\subsection{Proof of Lemma \ref{lem: 3}}
	\label{Appendix: lemma 3 solution}
		\begin{proof}
	Partial derivative of the optimal  \(n_t\) with respect to \(\gamma_1\), \(\gamma_4\) and \(\gamma_6\) leads to 
	\begin{equation*}
		\frac{\partial n_t}{\partial \gamma_1} = \frac{\partial n_t}{\partial \gamma_4} = 
		\frac{\partial n_t}{\partial \gamma_6} = 
		-\frac{\gamma_2 + \gamma_5 - \gamma_3 }{\tau S^2}  < 0.
	\end{equation*}
	Assumption of Modified Quantity-Quality trade of condition lead to $\gamma_2 + \gamma_5 - \gamma_3 > 0$ and that follow result.
\end{proof}

	\subsection{Proof of Lemma \ref{lem: 4}}
	\label{Appendix: lemma 4 solution}
		\begin{proof}
	Partial derivative of the optimal  \(c_t\), \(p_t\) and \(n_t\) with respect to assigned weight of pension premium \(\gamma_7\) leads 
	to 
	\begin{eqnarray*}
		\frac{\partial c_t}{\partial \gamma_7} &=& -\frac{w_t \gamma_1}{S^2} < 0, \\
		\frac{\partial p_t}{\partial \gamma_7} &=& -\frac{w_t \gamma_4}{S^2} < 0,\\
		\frac{\partial n_t}{\partial \gamma_7} &=& -\frac{\gamma_2 + \gamma_5 - \gamma_3}{\tau S^2} < 0.
	\end{eqnarray*}
\end{proof}

	\subsection{Proof of Theorem \ref{lem: 5}}
	\label{Appendix: lemma 5 solution}
		\begin{proof}
	Partial derivative of the optimal  \(s_t\) with respect to \(\gamma_7\) leads to 
	\begin{equation*}
		\frac{\partial s_t}{\partial \gamma_7} = -\frac{w_ t \gamma_6 }{S^2} < 0.
	\end{equation*}
\end{proof}

	\subsection{Proof of Proposition \ref{lem: 6}}
	\label{Appendix: lemma 6 solution}
		\begin{proof}
	From the expression for the optimal number of children $
	n_t = \frac{\gamma_2 + \gamma_5 - \gamma_3}{\tau S}. 
	$
	Taking the partial derivative with respect to $\gamma_3$ leads to
	\begin{equation*}
		\frac{\partial n_t}{\partial \gamma_3} = -\frac{1}{\tau S} < 0. 
	\end{equation*}
\end{proof}

	\subsection{Proof of Proposition \ref{lem: 7}}
	\label{Appendix: lemma 7 solution}
		\begin{proof}
	The partial derivative of the optimal education expenditure per child ($e_t$) with respect to $\gamma_2$ is given by:
	\begin{equation*}
		\frac{\partial e_t}{\partial \gamma_2} = -\frac{w_t \gamma_3 \tau}{(\gamma_2 + \gamma_5 - \gamma_3)^2} < 0. 
	\end{equation*}
\end{proof}

	\subsection{Proof of Theorem \ref{lem: 8}}
	\label{Appendix: lemma 8 solution}
		\begin{proof}
	Solution of maximization problem lead to following optimal solutions:
	\begin{align*}
		c_t = \frac{\gamma_1 w_t}{S},  \quad
		s_t = \frac{\gamma_6 w_t}{S}, \quad
		p_t = \frac{\gamma_4 w_t}{S}, \quad
		q_t = \frac{\gamma_7 w_t}{S}, \quad  
		n_t =  \frac{ \gamma_2 + \gamma_5 - \gamma_3}{\tau S}.
	\end{align*}
	Upon taking the partial derivative of the above optimal solutions with respect to \(\gamma_5\), we have:
	\[
	\frac{\partial c_t}{\partial \gamma_5} = -\frac{w_t \gamma_1}{S^2} < 0, \qquad
	\frac{\partial s_t}{\partial \gamma_5} = -\frac{w_t \gamma_6}{S^2}  < 0,\]
	\[
	\frac{\partial p_t}{\partial \gamma_5} = -\frac{w_t \gamma_4}{S^2} < 0, \qquad
	\frac{\partial q_t}{\partial \gamma_5} = -\frac{w_t \gamma_7}{S^2} < 0,
	\]
	\[\frac{\partial n_t}{\partial \gamma_5} = \frac{ S-(\gamma_2+\gamma_5)}{\tau S^2} > 0.\]
	\end{proof}
\end{appendices}



\begin{thebibliography}{9}
	\bibitem[Attanasio and Rohwedder (2003)]{Attanasio2003}
	Attanasio OP and Rohwedder S (2003) Pension Wealth and Household Saving: Evidence from Pension Reforms in the United Kingdom. American Economic Review, 93 (5): 1499-1521. DOI:  10.1257/000282803322655419
	
	
	\bibitem[Bazaraa et al. (2006)]{Bazaraa2006}
	Bazaraa MS, Sherali HD, and Shetty  CM (2006) Nonlinear Programming: Theory and Algorithms, 3rd ed., John Wiley and Sons, Hoboken.
	
	\bibitem[Becker (1960)]{Becker1960}
	Becker GS (1960) An economic analysis of fertility. In: Universities-National Bureau Demographic and economic change in developed countries. Columbia University Press, New York, pp 209-240. 
	
	\bibitem[Becker and Lewis (1973)]{beckerlewis1973}
	Becker GS, Lewis HG (1973) On the interaction between the quantity and quality of children. J Polit Econ 81(2): S279-S288.
	DOI: https://doi.org/10.1086/260166
	
	\bibitem[Becker and Tomes (1976)]{beckertomes1976}Becker GS, Tomes N (1976) Child endowments and the quantity and quality of children. J Polit Econ 84(4): S143-S162.
	DOI: https://doi.org/10.1086/260545
	
	\bibitem[Beshears et al. (2009)]{beshears2009}
	Beshears J, Choi JJ, Laibson D, Madrian BC (2009) The importance of default options for retirement saving outcomes: Evidence from the United States. In: Brown JR Social security policy in a changing environment. University of Chicago Press, Chicago, pp 167-195.
	
	\bibitem[Black et al. (2017)]{Black2017}
	Black RE, et al. (2017) Maternal and child undernutrition and overweight in low-income and middle-income countries. Lancet 382(9890): 427-451.
	DOI: https://doi.org/10.1016/S0140-6736(13)60937-X
	
	\bibitem[Blake (1981)]{Blake1981}
	Blake J (1981) Family size and the quality of children. Demography 18(4): 421-442.
	DOI: https://doi.org/10.2307/2060941
	
	\bibitem[Browning and Lusardi (1996)]{BrowningLusardi1996}
	Browning M, Lusardi A (1996) Household saving: Micro theories and micro facts. J Econ Lit 34(4): 1797-1855.
	
	\bibitem[Chiang (1984)]{Chiang1984}
	Chiang AC (1984) Fundamental methods of mathematical economics. McGraw-Hill, New York.
	
	\bibitem[Eurofound (2017)]{Eurofound2017}
	Eurofound (2017) Work-life balance and flexible working arrangements in the European Union. Eurofound, Dublin. \url{https://www.eurofound.europa.eu/publications/customised-report/2017/work-life-balance-and-flexible-working-arrangements-in-the-european-union}.  Accessed 17 July 2024.
	
	\bibitem[Galor and Weil (2000)]{GalorWeil2000}
	Galor O, Weil DN (2000) Population, technology, and growth: From Malthusian stagnation to the demographic transition and beyond. Am Econ Rev 90(4): 806-828.
	DOI: https://doi.org/10.1257/aer.90.4.806
	
	\bibitem[Gibbons (2007)]{Gibbons2007}
	Gibbons C (2007) A utility maximization example. University of California, Berkeley.
	
	\bibitem[Hanushek (1992)]{Hanushek1992}
	Hanushek EA (1992) The trade-off between child quantity and quality. J Polit Econ 100(1): 84-117.
	DOI: https://doi.org/10.1086/261808
		
	\bibitem[Heckman (2006)]{Heckman2006}
	Heckman JJ (2006) Skill formation and the economics of investing in disadvantaged children. Science 312(5782): 1900-1902.
	DOI: https://doi.org/10.1126/science.1128898
	
	\bibitem[ILO (2018)]{ILO2018}
	International Labour Organization (ILO) (2018) Care work and care jobs for the future of decent work. ILO, Geneva.
	\url{https://www.ilo.org/global/publications/books/WCMS_633135/lang--en/index.htm}. Accessed 1 June 2024.
	
	
	\bibitem[ILO (2019)]{ILO2019}
	International Labour Organization (ILO) (2019) Work for a brighter future - Global Commission on the Future of Work. ILO, Geneva. \url{https://www.ilo.org/global/publications/books/WCMS_662410/lang--en/index.htm}. Accessed 8 June 2024.
	
	\bibitem[Lusardi and Mitchell (2014)]{Lusardi2014}
	Lusardi A, Mitchell OS (2014) The economic importance of financial literacy: Theory and evidence. J Econ Lit 52(1): 05-44.
	DOI: https://doi.org/10.1257/jel.52.1.5
	
	\bibitem[Mas-Colell, Whinston, and Green (1995)]{MasColell1995}
	Mas-Colell A, Whinston MD, Green JR (1995) Microeconomic theory. Oxford University Press, Oxford.
	
	\bibitem[McKinsey Global Institute (2018)]{McKinsey2018}
	McKinsey Global Institute (2018) Skill shift: Automation and the future of the workforce. McKinsey and Company, New York. \url{https://www.mckinsey.com/featured-insights/future-of-work/skill-shift-automation-and-the-future-of-the-workforce}. Accessed 13 June 2024.
	
	\bibitem[Milligan (2005)]{Milligan2005}
	Milligan K (2005) Subsidizing the stork: New evidence on tax incentives and fertility. Rev Econ Stat 87(3): 539-555.
	DOI: https://doi.org/10.1162/0034653054638382
	
	\bibitem[MoE, GoI (2020)] {ministry2020}
	Ministry of Education, Government of India. (2020). \textit{National Education Policy 2020}. \url{https://www.education.gov.in/sites/upload_files/mhrd/files/NEP_Final_English_0.pdf}. Accessed 10 July 2024.
	
\bibitem[MoE, GoI (2021)] {ministry2021a}
Ministry of Education, Government of India. (2021). \textit{Prime Minister?s Special Scholarship Scheme}. \url{https://pib.gov.in/PressReleasePage.aspx'PRID=1681180}. Accessed 15 June 2024

\bibitem[Ministry of Finance, Government of India (2021)] {ministry2021b} Ministry of Finance, Government of India. (2021). \textit{Sukanya Samriddhi Account Scheme}.
\url{https://www.india.gov.in/sukanya-samriddhi-yojna}. Accessed 7 August 2024
	
	\bibitem[MoHFW, GoI (2021)] {ministry2021c}
	Ministry of Health and Family Welfare, Government of India. (2021). \textit{Ayushman Bharat - Pradhan Mantri Jan Arogya Yojana}. \url{https://pmjay.gov.in/about/pmjay}. Accessed 15 July 2024
	
	\bibitem[M/o L\&E, GoI (2020)]{MoLE2020}
	Ministry of Labour and Employment, Government of India. (2020). \textit{The Code on Social Security, 2020}. https://labour.gov.in/sites/default/files/SS\_Code\_Gazette.pdf.  Accessed 2 June  2024
		
	\bibitem[MoWCD, goI (2021)]{MoWCD2021}
	Ministry of Women and Child Development, Government of India. (2021). \textit{POSHAN Abhiyaan}. \url{https://icds-wcd.nic.in/nnm/home.htm}. Accessed 21 August 2024
		
	\bibitem[Moav (2005)]{Moav2005}
	Moav O (2005) Cheap children and the persistence of poverty. Econ J 115(500): 088-110.
	DOI: https://doi.org/10.1111/j.1468-0297.2004.00961.x
	
	\bibitem[Moreno-Serra and Smith (2012)]{MorenoSerra2012}
	Moreno-Serra R, Smith PC (2012) Does progress towards universal health coverage improve population health- Lancet 380(9845): 917-923.\url{https://doi.org/10.1016/S0140-6736(12)61039-3}
	
	\bibitem[OECD (2016)]{OECD2016}
	OECD (2016) Be flexible! Background brief on how workplace flexibility can help European employees to balance work and family. OECD, Paris.\url{https://www.oecd.org/els/family/Be-Flexible-Backgrounder-Workplace-Flexibility.pdf} Accessed 5 June 2024.
	
	\bibitem[OECD (2017)]{OECD2017}
	OECD (2017) Starting strong 2017: Key OECD indicators on early childhood education and care \url{https://doi.org/10.1787/9789264276116-en}Accessed 16 June 2024.
		
	\bibitem[OECD (2020)]{OECDINFE2020}
	OECD (2020) OECD/INFE 2020 international survey of adult financial literacy \url{https://www.oecd.org/financial/education/oecd-infe-2020-international-survey-of-adult-financial-literacy.pdf} Accessed 29 June 2024.
	
	\bibitem[PFRDA (2021)]{PFRDA2021}
	Pension Fund Regulatory and Development Authority, Government of India (2021) National pension system (NPS). https://www.pfrda.org.in/index1.cshtml-lsid=84. Accessed 1 September 2024

   \bibitem[Pew Research Center (2018)]{Pew2018}
	Pew Research Center (2018) The age of automation: Artificial intelligence, robotics, and the future of work. Pew Research Center, Washington DC. \url{https://www.pewresearch.org/internet/2018/12/10/artificial-intelligence-and-the-future-of-humans/}
	
	\bibitem[Piketty (2014)]{Piketty2014}
	Piketty T (2014) Capital in the twenty-first century. Harvard University Press, Cambridge.
		
	\bibitem[Prettner, Bloom, and Strulik (2013)]{Prettner2013}
	Prettner K, Bloom DE, Strulik H (2013) Declining fertility and economic well-being: Do education and health ride to the rescue- Labour Economics 22: 70-79
	DOI: https://doi.org/10.1016/j.labeco.2012.09.001
	
	\bibitem[Rosenzweig and Wolpin (1980)]{RosenzweigWolpin1980}
	Rosenzweig MR, Wolpin KI (1980) Testing the quantity-quality fertility model: The use of twins as a natural experiment. Econometrica 48(1): 227-240.
	DOI: https://doi.org/10.2307/1912026
	
	
	\bibitem[Rosenzweig and Schultz (1982)]{RosenzweigSchultz1982}
	Rosenzweig MR, Schultz TP (1982) Child mortality and fertility in Colombia: Individual and community effects. Health Policy Educ 2(3-4): 305-348.
	DOI: https://doi.org/10.1016/0165-2281(82)90015-7
	
	\bibitem[Ruhm (2004)]{Ruhm2004}
	Ruhm CJ (2004) Parental employment and child cognitive development. J Hum Resour 39(1): 155-192.
	DOI: https://doi.org/10.2307/3559009

	\bibitem[Samuelson and Nordhaus (2009)]{SamuelsonNordhaus2009}
	Samuelson PA, Nordhaus WD (2009) Economics. McGraw-Hill, New York.
	
	\bibitem[Stokey, Lucas, and Prescott (1989)]{Stokey1989}
	Stokey NL, Lucas RE, Prescott EC (1989) Recursive methods in economic dynamics. Harvard University Press.
	
	\bibitem[Thévenon,and Solaz (2013)]{Thevenon2013}
	Thevenon O, Solaz A (2013) Labour market effects of parental leave policies in OECD countries. OECD Social, Employment and Migration Working Papers, No. 141, OECD Publishing
	DOI: https://doi.org/10.1787/5k8xb6hw1wjf-en
	
\bibitem[UNESCO (2018)]{UNESCO2018}
UNESCO (2018) Global education monitoring report 2018. UNESCO, Paris. \url{https://unesdoc.unesco.org/ark:/48223/pf0000261593}. Accessed 13 july 2024.
		
		
	
	\bibitem[UNESCO (2020)]{UNESCO2020}
	UNESCO (2020) Global education monitoring report 2020: Inclusion and education: All means all. \url{https://en.unesco.org/gem-report/report/2020/inclusion} Accessed 23 july 2024.
	
	\bibitem[UNICEF (2019)]{UNICEF2019}
	UNICEF (2019) The state of the world's children 2019. Children, food and nutrition: Growing well in a changing world. \url{https://www.unicef.org/reports/state-of-worlds-children-2019} Accessed 24 July 2024.
		
	
	\bibitem[Varian (1992)]{Varian1992}
	Varian HR (1992) Microeconomic analysis. W. W. Norton and Company
	
	\bibitem[Varian (2014)]{Varian2014}
	Varian HR (2014) Intermediate microeconomics: A modern approach. W. W. Norton and Company. 
	
	\bibitem[World Bank (2018)]{WorldBank2018}
	World Bank (2018) World development report 2018: Learning to realize education's promise
	DOI: \url{https://doi.org/10.1596/978-1-4648-1096-1}
	
	\bibitem[World Bank (2019)]{WorldBank2019}
	World Bank (2019) The World Bank pension conceptual framework
	DOI: \url{https://doi.org/10.1596/978-1-4648-1600-0}
	
	
	\bibitem[World Bank (2021)]{WorldBank2021}
	World Bank (2021) World development report 2021: Data for better lives. World Bank, Washington DC. \url{https://www.worldbank.org/en/publication/wdr2021} Accessed 2 September  2024.
	
	\bibitem[World Economic Forum (2020)]{WEF2020}
	World Economic Forum (2020) The future of jobs report 2020. \url{http://www3.weforum.org/docs/WEF_Future_of_Jobs_2020.pdf} Accessed 30 July 2024.
	
	\bibitem[WHO (2019)]{WHO2019}
	World Health Organization (WHO) (2019) Universal health coverage (UHC). \url{https://www.who.int/news-room/fact-sheets/detail/universal-health-coverage-(uhc)}
	Accessed 17 August 2024.
	
	\bibitem[World Health Organization (2020)]{WHO2020}
	World Health Organization (WHO) (2020) Global tuberculosis report 2020. WHO, Geneva. \url{https://www.who.int/publications/i/item/9789240013131} Accessed 23 August 2024.
\end{thebibliography}
\end{document}